\documentclass[runningheads]{llncs}
\pdfoutput=1
\usepackage{graphicx}
\usepackage{hyperref}
\usepackage{amssymb}



\def\cut{\not\rightarrow}

\makeatletter
\def\hb@xt@{\hbox to }
\makeatother

\let\oldendproof\endproof
\def\endproof{\qed\oldendproof}

\title{Flows in One-Crossing-Minor-Free Graphs}

\author{Erin Chambers\inst{1} \and David Eppstein\inst{2}}
\institute{\noindent Dept. of Mathematics and Computer Science, Saint Louis University
\and Computer Science Department, University of California, Irvine}

\date{ }

\begin{document}

\maketitle

\begin{abstract}
We study the maximum flow problem in directed $H$-minor-free graphs where $H$ can be drawn in the plane with one crossing. If a structural decomposition of the graph as a clique-sum of planar graphs and graphs of constant complexity is given, we show that a maximum flow can be computed in $O(n\log n)$ time. In particular, maximum flows in directed $K_{3,3}$-minor-free graphs and directed $K_5$-minor-free graphs can be computed in $O(n\log n)$ time without additional assumptions.
\end{abstract}


\section{Introduction}

Computing maximum flows is fundamental in algorithmic graph theory, and has many applications. Although flows can be computed in polynomial time for arbitrary graphs, it is of interest to find classes of graphs for which flows can be computed more quickly, and specialized algorithms are known for flows in planar graphs~\cite{bk-amfdp-09,ff-mfn-56,h-mfpn-81,hj-oamfu-85,is-mfpn-79,jv-udcffdpn-82,kn-fpgwvc-94,knk-lsfpg-93,mn-fpgms-95,w-mstfpn-97}, graphs of bounded genus~\cite{cen-hfcc-09,cen-mcshc-09}, graphs with small crossing number~\cite{hw-mstfkc-07}, and graphs of bounded treewidth~\cite{hknr-cmfnc-98}.

Planar graphs, graphs of bounded genus, and graphs of bounded treewidth are \emph{minor-closed graph families}, families of graphs closed under  edge contractions and edge deletions.  According to the Robertson--Seymour graph minor theorem~\cite{rs-gm20wc-04}, any minor-closed graph family can be described as the $X$-minor-free graphs, graphs that do not have as a minor any member of a finite set $X$ of non-members of the family; for instance, the planar graphs are exactly the $\{K_5,K_{3,3}\}$-minor-free graphs~\cite{wagnerk5}. In many cases the properties of a graph family are closely related to the properties of its excluded minors: for instance, the minor-closed graph families with bounded treewidth are exactly the families of $X$-minor-free graphs for which $X$ includes at least one planar graph~\cite{rs-gm5epg-86}, and the families with bounded local treewidth (a functional relationship between the diameter of a graph and its treewidth) are exactly those for which $X$ includes at least one \emph{apex graph}, a graph that can be made planar by removing a single vertex~\cite{e-dtmcgf-00}. If $X$ includes a graph that can be drawn in the plane with a single pair of crossing edges, then the $X$-minor-free graphs have a structural decomposition as a \emph{clique-sum} of smaller graphs that are either planar or have bounded treewidth~\cite{Demaine021.5-approximationfor,rs-onecrossing}. In this last case we say that the family of $X$-minor-free graphs is \emph{one-crossing-minor-free}; families of this type include the $K_{3,3}$-minor-free graphs and $K_5$-minor-free graphs, since $K_{3,3}$ and $K_5$ are one-crossing graphs (Figure~\ref{fig:onecrossing}).

\begin{figure}[t]
\centering\includegraphics[width=2.75in]{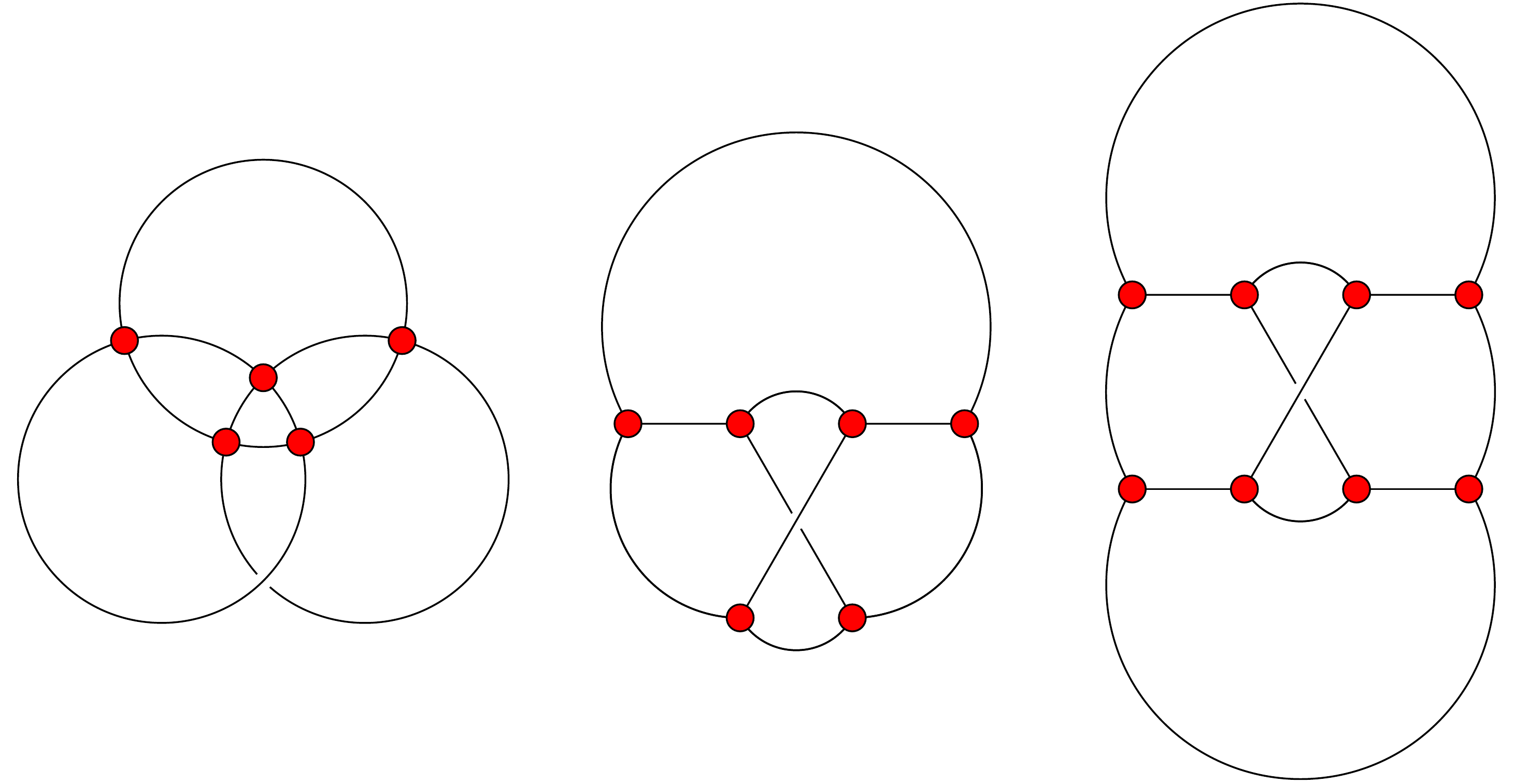}
\caption{One-crossing drawings of $K_5$ (left), $K_{3,3}$ (center), and the Wagner graph (right).}
\label{fig:onecrossing}
\end{figure}

In this paper we consider flows in one-crossing-minor-free graph families. We provide $O(n \log n)$ algorithms to compute maximum flows in any directed $H$-minor-free graph, where $H$ is a fixed one-crossing graph and where the structural decomposition of the graph is provided as part of the input. In the case of $K_{3,3}$-minor-free graphs and $K_5$-minor-free graphs, algorithms are known that can find such a decomposition efficiently~\cite{asano85,rl-ork5mfg-08}, and by combining our techniques with those known decomposition algorithms we provide an $O(n\log n)$ time algorithm for maximum flow in directed $K_{3,3}$-minor-free and $K_5$-minor-free graphs without requiring the decomposition to be part of the input.

Our main motivation for looking at flows in one-crossing-minor-free graphs is to try to make progress towards finding flows in arbitrary minor-closed graph families. Due to known separator theorems for minor-closed families~\cite{ast-stgema-90}, an algorithm of Johnson and Venkatesan~\cite{jv-udcffdpn-82} can be applied to find flows in any minor-closed family in time $O(n^{3/2}\log n)$, but this does not come close to the nearly-linear time bounds known for planar and bounded-genus graphs.
Like one-crossing-minor-free graphs, graphs in more general minor-closed families have a structural decomposition in terms of bounded-genus surfaces, clique-sums, apexes (a constant number of vertices that can be adjacent to arbitrary subsets of each of the bounded-genus surfaces), and vortexes (bounded-treewidth graphs glued into faces of the bounded-genus surfaces)~\cite{dhk-agmtdac-05,rs-gm16enpg-03}. However, these decompositions are greatly simplified in the one-crossing-minor-free case: the surfaces are planes and there are no apexes or vortexes. To handle the general case, we would need to combine clique-sums, bounded-genus, apexes, and vortexes.  The problem of flows on bounded genus surfaces has been previously examined~\cite{cen-hfcc-09,cen-mcshc-09} and the present work focuses on clique-sums, as these are the main feature in the structural decomposition for one-crossing-minor-free graphs. However, it remains unclear how to handle apexes and vortexes.

As an important tool in our results, we greatly simplify the \emph{mimicking networks} of Hagerup et al~\cite{hknr-cmfnc-98} for multiterminal flow networks in the case of four terminals with a single source, leading to significantly reduced constant factors in the running time of algorithms that use this networks. Similar simplifications had been achieved in the undirected case~\cite{cswz-cmn-00} but to our knowledge our small directed mimicking network is novel.

 \section{Preliminaries}

\subsection{Flows and Cuts}

A \emph{flow network} is a graph $G=(V,E)$, where each edge $e \in E$ has an associated nonnegative real capacity $c_e$, along with two distinguished vertices $s$ and $t$ which are called the \emph{source} and \emph{sink}, respectively.  A \emph{flow} in this graph is a set of nonnegative real values $f_e$ for each edge $e \in E$ such that $f_e \le c_e$ for every $e \in E$ and $\sum_{uv \in E} f_{uv} = \sum_{(vu) \in E} f_{vu}$ for every $v \in V - \{s,t\}$.  The \emph{value} of the flow is the amount of net flow going from $s$ to $t$, or $\sum_{sv \in E} f_{sv}$. 

A \emph{cut} in a flow network is a set of edges separating $s$ from $t$; the \emph{capacity} of the cut is the sum of the capacities of all the edges that cross the cut in the direction from $s$ to $t$.  Clearly, the value of any flow is less than or equal to the value of any cut.  The classic max-flow min-cut theorem states that the maximum possible flow from $s$ to $t$ is in fact equal to the minimum possible cut.

In our setting, we will need to know the maximum possible flow that can travel through a subgraph of the input graph; this will allow us to simplify the graph by removing the subgraph and replacing it with an equivalent (but much smaller) subgraph.  To do this, we compute an \emph{external flow}.  In external flow networks, instead of a single source and sink, we have an ordered set of terminals $Q = \{ q_1, \ldots q_k \}$ where each terminal $q_i$ has associated with it a number $x_i$; we require that $\sum_i x_i=0$. If $x_i$ is positive then it is interpreted as a supply, and if $x_i$ is negative it is interpreted as a demand. (It may also be the case that $x_i$ is zero, in which case $x_i$ carries neither a supply nor a demand.)  A \emph{realizable external flow} is a set of $k$ values $(x_1, \ldots, x_k)$ for $(q_1, \ldots, q_k)$, along with a flow $f$ such that $\sum_{(q_i,v) \in E} f_{(q_i,v)} - \sum_{(v,q_i) \in E} f_{(v,q_i)} = x_i$ for all $i$.  Basically, the flow remains balanced at every vertex in $V \setminus Q$, and the imbalance of flow at each vertex in $q_i \in Q$ is exactly $x_i$.

It will be helpful to define a special case of external flow networks, which we call \emph{single-source external flow networks}. A single-source external flow network is, simply, an external flow network for which only $q_1$ may have a positive supply $x_i$; every other terminal has $x_i\le 0$ indicating that it is either a demand node or is inactive as a terminal. As before, a \emph{realizable single-source external flow} is a realizable external flow subject to this constraint on the values of $x_i$.

We define $S\cut T$, for sets of terminals $S$ and $T$ in an external flow network, to be the minimum value of a cut for which every terminal in $S$ is on the source side of the cut and every terminal in $T$ is on the sink side of the cut. We will further abbreviate the notation by writing strings of symbols instead of bracketed set notation for the sets of terminals on each side of the cut; e.g., $s\cut abc$ should be interpreted as an abbreviation for $\{s\}\cut\{a,b,c\}$.

\subsection{Mimicking networks}

Let $G$ be an external flow network or single-source external flow network with a fixed specification of the edge capacities and a fixed ordered set of terminals $Q$, but where the supply and demand quantities $x_i$ remain variable. A key ingredient for our technique, as formalized by Hagerup et al.~\cite{hknr-cmfnc-98}, is the concept of a \emph{mimicking network}, a network $H$ that has the same terminals as $G$ and has the same realizable external flows, but may have many fewer edges and vertices than $G$.  In what amounts to a form of separator based sparsification~\cite{egis-sbs-93}, Hagerup et al.{} solve flow problems on bounded-treewidth networks by repeatedly replacing subnetworks of the given network with smaller mimicking networks.
Their construction of mimicking networks is based on an observation of Gale~\cite{gale}:

\begin{lemma} \cite{hknr-cmfnc-98} \label{hagerup}
An external flow $(x_1, \ldots, x_k)$ is realizable in a network $G=(V,E)$ with terminals $Q = \{ q_1, \ldots, q_k \}$ if and only if the following relations are satisfied:
\begin{eqnarray}
\sum_{i=1}^k x_i & = & 0\\
\sum_{q_i \in S} x_i & \le & \bigl(S\cut (Q\setminus S)\bigr), \mbox{ for all } S \subseteq Q \mbox{ with } \emptyset\ne S\ne Q.
\end{eqnarray}
\end{lemma}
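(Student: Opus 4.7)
My plan is to prove both directions of the equivalence. Necessity will follow directly from flow conservation, and sufficiency will follow by a super-source/super-sink reduction combined with the classical max-flow min-cut theorem already stated earlier in the paper.

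For necessity, assume a flow $f$ realizes $(x_1, \ldots, x_k)$. Summing the imbalance equations over all terminals and the conservation equations over all non-terminals, each edge contributes $+f_e$ and $-f_e$ and cancels, yielding $\sum_i x_i = 0$, which is relation (1). For relation (2), fix a nonempty proper subset $S \subseteq Q$ and any cut $C$ placing $S$ on the source side and $Q \setminus S$ on the sink side. Summing the imbalance and conservation equations over the vertices on the $S$-side shows that $\sum_{q_i \in S} x_i$ equals the net flow crossing $C$, which is bounded above by the capacity of $C$. Taking the minimum over all such $C$ yields $\sum_{q_i \in S} x_i \le \bigl(S \cut (Q \setminus S)\bigr)$.

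For sufficiency, assume (1) and (2) hold. I would augment $G$ by adding a super-source $s^*$ with an edge of capacity $x_i$ to each terminal $q_i$ with $x_i > 0$, and a super-sink $t^*$ with an edge of capacity $-x_i$ from each terminal with $x_i < 0$. A realizable external flow $(x_1, \ldots, x_k)$ exists in $G$ if and only if the augmented network admits an $s^*$-to-$t^*$ flow saturating every $s^*$-incident edge, that is, a flow of value $M := \sum_{x_i > 0} x_i$. By max-flow min-cut, such a flow exists if and only if every $s^*$-$t^*$ cut in the augmented network has capacity at least $M$.

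The substantive content of the lemma is the translation between this min-cut condition and relation (2). An arbitrary $s^*$-$t^*$ cut corresponds to a vertex partition $(A, B)$ with $s^* \in A$ and $t^* \in B$; setting $S := A \cap Q$ gives $Q \setminus S = B \cap Q$. Its capacity decomposes into three pieces: the $s^*$-incident edges cut, contributing $\sum_{q_i \in Q\setminus S,\, x_i > 0} x_i$; the $t^*$-incident edges cut, contributing $\sum_{q_i \in S,\, x_i < 0} (-x_i)$; and the original edges from $A$ to $B$. Algebraic rearrangement, using (1) to rewrite $M = \sum_{x_i > 0} x_i$ and combining positive and negative contributions across $S$ and $Q \setminus S$, shows that the cut capacity is at least $M$ precisely when the original-edge contribution is at least $\sum_{q_i \in S} x_i$. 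Minimizing over $(A, B)$ with $S$ held fixed yields exactly $S \cut (Q \setminus S)$, so the max-flow min-cut criterion matches relation (2) term-for-term; the boundary cases $S = \emptyset$ and $S = Q$ are trivial or absorbed by (1). The main obstacle is the bookkeeping of the three cut contributions and keeping track of signs; once that rearrangement is verified, the lemma follows immediately from max-flow min-cut.
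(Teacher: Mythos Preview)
The paper does not actually prove this lemma; it is quoted from Hagerup et al.\ (attributed in turn to an observation of Gale) with no proof given. Your argument is correct and is precisely the standard one: necessity follows from summing conservation/imbalance equations across a cut, and sufficiency is the super-source/super-sink reduction to two-terminal max-flow min-cut, with the identity
\[
M-\sum_{q_i\notin S,\;x_i>0}x_i-\sum_{q_i\in S,\;x_i<0}(-x_i)=\sum_{q_i\in S}x_i
\]
handling the bookkeeping you flag as the main obstacle.

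For comparison, the paper \emph{does} prove the single-source variant (Lemma~\ref{ssmimic}) by exactly this method, adding only a super-sink since a single source is already present; your two-sided reduction is the natural extension of that argument to the general multi-terminal case. So while there is no proof in the paper to compare against directly, your approach is fully in line with the technique the authors themselves use for the companion lemma.
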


Essentially, this means that in order to understand the possible flow patterns in a subnetwork with $k$ terminals, one needs only to know the $(2^k - 2)$ minimum cut values from a nonempty subset of terminals to its nonempty complement. If two networks have the same minimum cut values for each subset then they behave the same with respect to flows.  Based on this observation, Hagerup et al.{} show that there exists a replacement network of at most $2^{2^k-2}$ vertices that behaves equivalently to any $k$-terminal network:

\begin{figure}[t]
\centering\includegraphics[width=\textwidth]{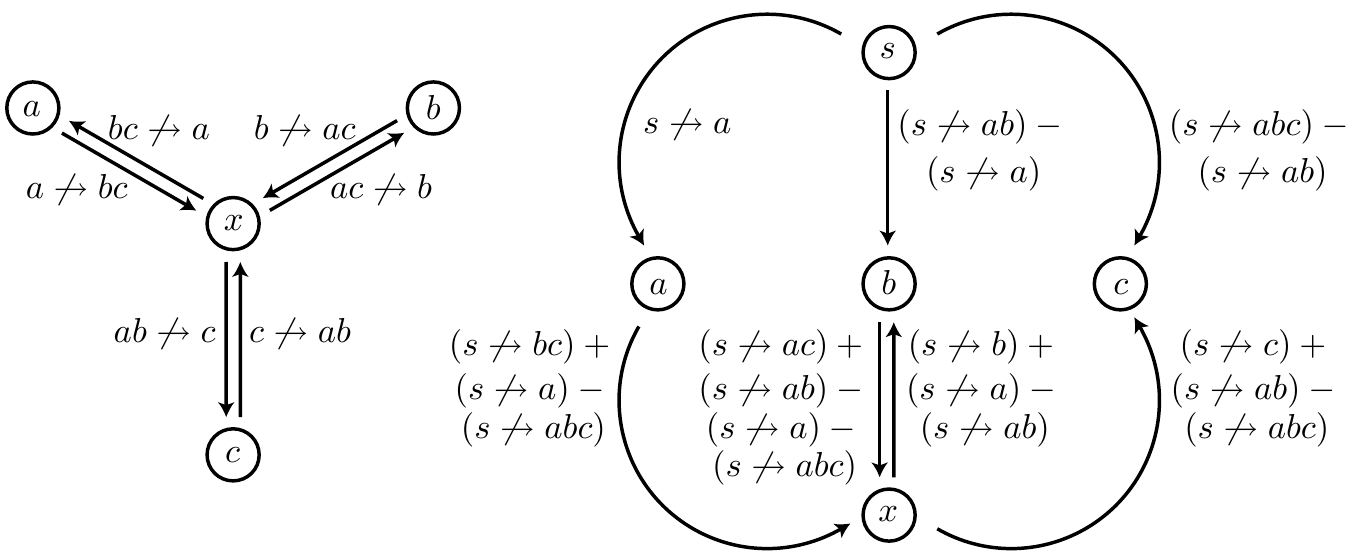}
\caption{Mimicking networks for an external flow network with three terminals $a$, $b$, and $c$ (left), and for a single-source external flow network with four terminals $s$, $a$, $b$, and $c$ (right). In the single-source mimicking network, $a$, $b$, and $c$ are permuted if necessary so that $s\cut a\ge\max \{s\cut b,s\cut c\}$ and $s\cut ab\ge s\cut ac$.}
\label{mimics}
\end{figure}

\begin{lemma} \cite{ hknr-cmfnc-98} \label{hagerup2}
Given any external flow network $G$ with $k$ terminals, there exists a flow network having at most $2^{2^k-2}$ vertices which has the same external flow value as $G$.
\end{lemma}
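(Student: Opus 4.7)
The plan is to invoke Lemma~\ref{hagerup} as a characterization: the realizable external flows of $G$ depend only on the $2^k-2$ minimum cut values $S\cut(Q\setminus S)$, one for each nonempty proper subset $S\subsetneq Q$. So it is enough to produce a flow network $H$ with the same ordered terminal set $Q$, all of the same $2^k-2$ cut values, and no more than $2^{2^k-2}$ vertices; then Lemma~\ref{hagerup} immediately gives that $H$ and $G$ admit the same realizable external flows.

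To build $H$, first fix, once and for all, an actual minimum cut $C_S$ witnessing each value $S\cut(Q\setminus S)$, and let $A_S\subseteq V$ be its source side, so that $S\subseteq A_S$ and $Q\setminus S\subseteq V\setminus A_S$. For each vertex $v\in V$, form a signature $\sigma(v)\in\{0,1\}^{2^k-2}$ whose coordinate indexed by $S$ records whether $v\in A_S$. Let $H$ be the multigraph obtained from $G$ by identifying vertices that share a signature, keeping every original edge with its original capacity (self-loops inside a single equivalence class may be dropped since they play no role in any cut). Because the cut isolating any single terminal $q_i$ from $Q\setminus\{q_i\}$ places $q_i$ on one side and all other terminals on the other, the terminals receive pairwise distinct signatures and survive as distinct vertices of $H$. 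The total number of equivalence classes, and hence $|V(H)|$, is at most $2^{2^k-2}$.

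It remains to check that for every nonempty proper $S\subsetneq Q$, the minimum $S$-cut value in $H$ equals that in $G$. For the upper bound, by construction of $\sigma$ every equivalence class lies entirely within $A_S$ or entirely within $V\setminus A_S$, so $C_S$ descends to a well-defined $S$-versus-$(Q\setminus S)$ cut of $H$ of the same capacity. For the lower bound, any $S$-cut in $H$ pulls back to an $S$-cut in $G$ of the same capacity by expanding each supernode to its preimage equivalence class, and this pullback is at least the minimum $S$-cut in $G$. The main subtlety is precisely this simultaneous preservation of all $2^k-2$ cut values at once: each individual cut could be captured by a very coarse quotient, but we must make the quotient fine enough to respect every cut simultaneously, which is exactly why the signature construction uses the entire $(2^k-2)$-bit product of sides and why the resulting bound on $|V(H)|$ is double-exponential in $k$.
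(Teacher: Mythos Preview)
Your argument is correct and follows exactly the construction the paper describes: fix one minimum cut for each of the $2^k-2$ bipartitions of the terminals and collapse vertices that lie on the same side of every cut, then invoke Lemma~\ref{hagerup} to conclude. You have simply spelled out in full the details the paper only sketches, including the verification that terminals receive distinct signatures and that every relevant cut value is preserved in both directions.
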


Specifically, the mimicking network of Lemma~\ref{hagerup2} can be constructed from $G$ by finding a set of $2^k-2$ minimum cuts, one for each partition of the terminals into two nonempty subsets, and by collapsing subsets of vertices in $G$ into a single supervertex whenever all vertices in the subset are on the same side of every cut. Note that the mimicking network is not necessarily a minor of $G$, as the collapsed subsets need not form connected subgraphs of $G$.

However, the size of the mimicking networks formed by Lemma~\ref{hagerup2}, while constant, is large. Our algorithms will involve external flow networks with up to four terminals, and if we applied Lemma~\ref{hagerup2} directly we might get as many as 16384 vertices in our mimicking networks. 
It is possible to reduce the number of vertices in the construction of Hagerup et al. from a power of two to a Dedekind number by requiring nested partitions of the terminals to have nested cuts, but this would still lead to 168 vertices for the mimicking network of a four-terminal network.
We describe in an appendix a simpler mimicking network of Chaudhuri et al.~\cite{cswz-cmn-00}  for external flows with at most three terminals, and a new mimicking network for single source external flows with at most four terminals that are both much smaller, requiring at most one nonterminal vertex. These simplified mimicking networks are depicted in Figure~\ref{mimics}.
It is important for our techniques that the mimicking network for a three-terminal network is planar and has its three terminals in a single face of its planar embedding (more strongly, in fact, it is outerplanar); however, we do not rely on the planarity of the four-terminal mimicking network.

\subsection{Structure of minor free graphs}
 
A \emph{minor} of a graph $G$ is a graph that can be formed from $G$ by contracting and removing edges. A graph family $F$ is \emph{minor-closed} if every minor of a graph in $F$ also belongs to $F$. If $X$ is a finite set of graphs, the $X$-minor-free graphs are the graphs $G$ such that no minor of $G$ belongs to $X$; the $X$-minor-free graphs are obviously a minor-closed graph family, and (much less obviously) the Robertson--Seymour graph minor theorem~\cite{rs-gm20wc-04} states that every minor-closed graph family has this form. If $F$ is the family of $X$-minor-free graphs, then $X$ is the set of \emph{forbidden minors} for $F$; for instance, $K_5$ and $K_{3,3}$ are the forbidden minors for the planar graphs. We will abuse notation and abbreviate the $\{H\}$-minor-free graphs (where $H$ is a single graph) as the $H$-minor-free graphs.

\begin{figure}[t]
\centering\includegraphics[width=3in]{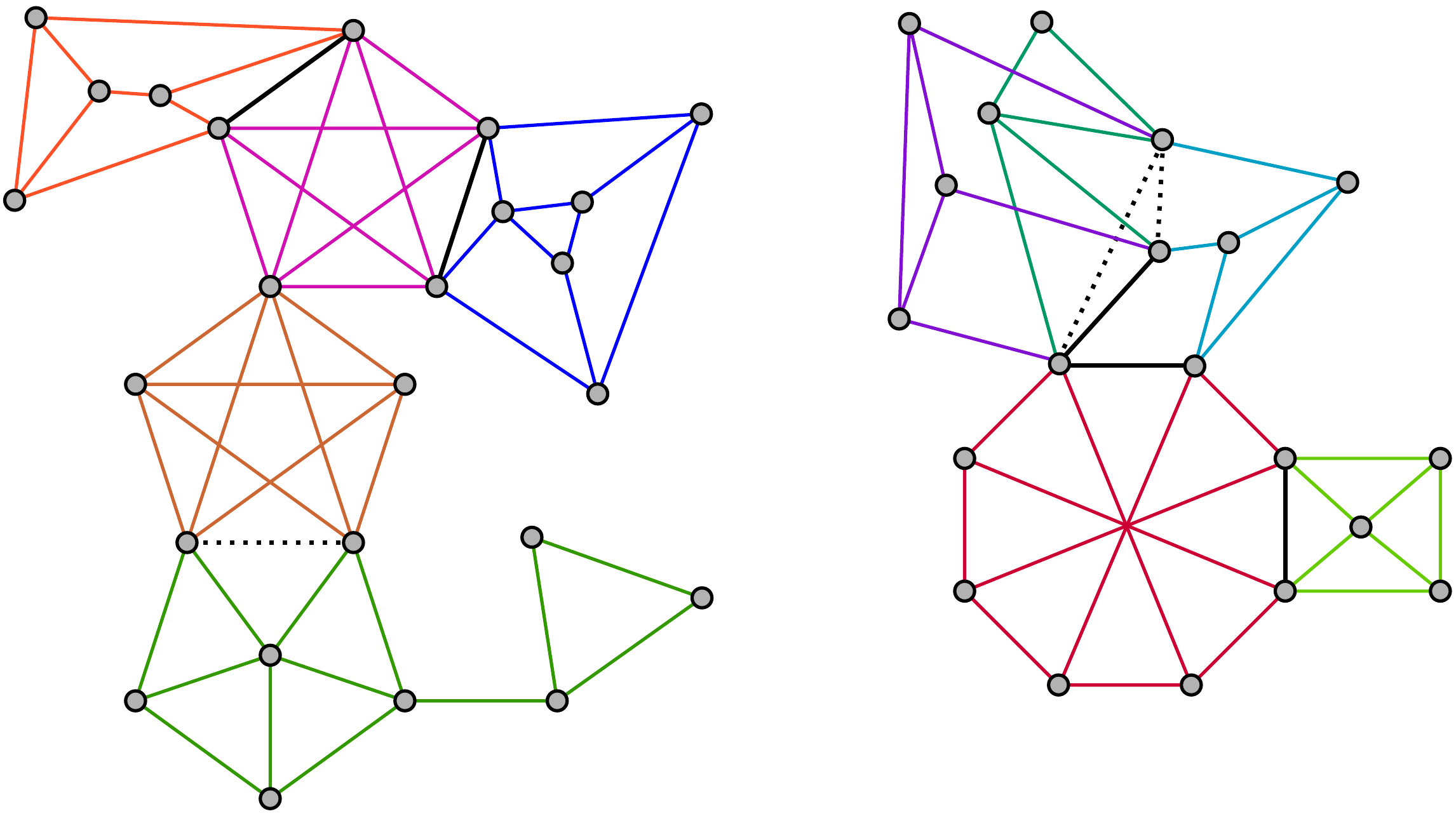}
\caption{A $K_{3,3}$-free graph expressed as a 2-sum of planar graphs and copies of $K_5$ (left), and a $K_5$-free graph expressed as a 3-sum of planar graphs and copies of the Wagner graph (right). The edges forming the cliques of the clique-sum operations are shown as solid black (if they remain in the final graph) or dashed (if they were removed after the clique-sum operation); the colors of the remaining edges identify the subgraphs entering into the clique-sum operations.}
\label{fig:ksums}
\end{figure}

A \emph{clique-sum} is an operation that combines two graphs by identifying the vertices in two equal-sized cliques in the two graphs, and then possibly removing some of the edges of the cliques. A \emph{$k$-sum} is a clique-sum where all cliques have at most $k$ vertices. More generally, we will say that a given graph is a $k$-sum of a collection of more than two graphs if it can be formed by repeatedly replacing pairs of graphs of the collection by their $k$-sum.
Clique-sums are closely related to vertex-connectivity of graphs: for instance, the decomposition of a biconnected graph into triconnected components, formalized in the SPQR-tree, gives a representation of the graph as a $2$-sum of triconnected graphs, cycles, and two-vertex multigraphs~\cite{dt-ipt-89,ht-dgtc-73,m-scpcg-37}. (As some of our algorithms use the SPQR-tree decomposition, we include a brief description of it in an appendix.)

Besides their relation to connectivity, clique-sums have also played an important role in describing the structure of minor-closed graph families since the proof by Wagner \cite{wagnerk5} that $K_5$-minor-free graphs are exactly the graphs that can be formed by 3-sums of planar graphs and the 8-vertex nonplanar Wagner graph, the graph formed by adding four edges connecting opposite pairs of vertices in an 8-cycle (Figure~\ref{fig:ksums}, right). Wagner \cite{wagnerk33} and D. W. Hall \cite{hall} also proved that $K_{3,3}$-minor-free graphs are exactly the 2-sums of planar graphs and the five-vertex complete graph~$K_5$ (Figure~\ref{fig:ksums}, left).
These two characterizations of $H$-minor-closed families can be generalized to any $H$-minor-free graph with the property that $H$ can be drawn in the plane with only a single edge crossing, as $K_{3,3}$ and $K_5$ both can. In this case, the $H$-minor-free graphs can always be decomposed into 3-sums of planar graphs and graphs of bounded treewidth~\cite{rs-onecrossing}.

Algorithmically, a decomposition of a $K_{3,3}$-minor-free graph into a 2-sum of planar graphs and $K_5$ can be computed in linear time~\cite{asano85}. Essentially, this is simply a matter of constructing the SPQR tree and verifying that all triconnected components are either planar or $K_5$.
As has been shown more recently, a decomposition of a $K_5$-minor-free graph into a 3-sum of planar graphs and the Wagner graph can also be constructed in linear time~\cite{139475,rl-ork5mfg-08}.
Algorithmic versions of the generalized decomposition of graphs in any $H$-minor-free family (where $H$ is a one-crossing graph) into 3-sums of planar graphs and graphs of bounded treewidth can also be solved in polynomial time~\cite{Demaine021.5-approximationfor}, but the polynomial $O(n^4)$ is too high to be of use in flow algorithms, so in this case we will assume that a decomposition has been given to us as part of the input. In case a future development leads to linear time algorithms for decomposition of one-crossing-minor-free graphs, this assumption may be removed.

\section{Flow algorithm}

Recall first that any $H$-minor free graph $G$ (where $H$ is a one-crossing graph) can be decomposed into 3-sums of planar and bounded treewidth graphs; we refer to these smaller graphs as \emph{components} of the clique-sum decomposition.  In order to handle the case that three or more components are glued together by a 3-sum at a single shared clique, we may represent the decomposition as a 2-colored tree, in which the vertices of one color represent the components of the decomposition and the vertices of the other color represent the cliques at which they are glued together. We may identify two distinguished components of the decomposition, one containing $s$ and another containing $t$, where $s$ and $t$ are the two terminals of our given flow problem; if $s$ or $t$ is part of a clique on which multiple components are glued, then the distinguished component containing that terminal may be chosen arbitrarily among all components containing it. The two distinguished components are necessarily connected by a path in the clique-sum decomposition tree (possibly a path of length zero).

The vertices of $G$ that belong to the cliques of the clique-sum decomposition, in effect, will form terminals in each component, so any flows into and out of components of the decomposition can be treated as an external flow computations.  Our algorithm will iteratively replace each component in the decomposition with a mimicking network of constant size.  However care is needed in ordering these computations, as some components may have more than a constant number of vertices that belong to cliques of the decomposition and we can only use mimicking networks for components that have a constant number of terminals. We perform this replacement in two phases: in the first phase, the components that are replaced do not belong to the path from $s$ to $t$ of the decomposition tree, and in the second phase the replaced components lie along this path.

\paragraph{\bf Refining the decomposition.}
For technical reasons it is necessary to ensure that the planar components in our decomposition are only glued together by 3-sums along their faces, which we achieve by refining the given clique-sum decomposition tree. To do so, we make the following changes, each of which replaces one of the components in the decomposition by a clique-sum of smaller components:
\begin{itemize}
\item If any component of the decomposition is not biconnected, replace it with the 1-sum of its biconnected components, glued together by 1-sums at its articulation vertices.
\item If any component of the decomposition is biconnected but not triconnected, find an SPQR tree representing it as a 2-sum of its triconnected components~\cite{dt-ipt-89,ht-dgtc-73,m-scpcg-37}, as outlined in an appendix.
\item At this stage of refinement, each component is triconnected, and in particular each planar component has a unique planar embedding. Within each planar component, find all the triangles at which 3-sums are glued, and compare them against the list of triangular faces of the graph. The gluing triangles that are not faces are separating triangles of the component, and may be used to partition each planar component into a 3-sum of smaller planar components.
\end{itemize} 
All these refinements may be performed in linear time, and after this stage each triangle in each planar component of the decomposition forms a face of the component.

\paragraph{\bf Simplification phase I: components off the terminal path.}
Next, we deal with all components that are \emph{not} on the $s$ to $t$ path in the clique-sum decomposition tree.  Our algorithm iteratively finds a component $C_i$ that is a leaf in the clique-sum decomposition tree; let $C_j$ be the component to which $C_i$ is glued by a clique-sum, one step closer to the $s$--$t$ path.
Then $C_i$ is connected to $C_j$ and the rest of the graph via at most three vertices, so we can treat the flow calculation within that component as an external flow problem with at most three terminals.  Since the subgraph is either planar or of bounded treewidth, we can in $O(n \log n)$ time compute the min-cut values of all 6 possible partitions of its terminals, and by Lemma~\ref{mimic3}, we can replace the component $C_i$ with a mimicking network $C_i'$ of at most four vertices. We then split into cases according to how $C_i$ and $C_j$ fit into the rest of the clique-sum decomposition tree:
\begin{itemize}
\item If $C_j$ is a bounded-treewidth component, then $C_i'$ can be glued directly into $C_j$ by performing the clique-sum that connects these two components, forming a decomposition tree with one fewer component. The treewidth of the merged component is the maximum of 3 and the treewidth of $C_j$ (using the fact that the mimicking network for $C_i$ is at worst an oriented form of $K_4$), so after any number of repetitions of steps of this type the treewidth remains bounded.
\item If $C_j$ is a planar component, and $C_i$ connects to $C_j$ via one or two terminals or via a triangle of three terminals that is not used for any other 3-sum in the decomposition tree, then again $C_i'$ can be immediately glued into $C_j$ forming a decomposition tree with one fewer component. Due to the planar structure of our three-terminal mimicking networks and due to the fact that the gluing triangle is a face of $C_j$, this step preserves both the planarity of $C_j$ and the property that all gluing triangles in the new larger planar component are still faces of their component.
\item If $C_j$ is a planar component and the gluing triangle of $C_i$ is shared with another four-vertex replacement network $C_h'$, then $C_h'$ and $C_i'$ may be merged into a single mimicking network. Then, if this merged component is the only component that uses that gluing triangle, it may be glued into $C_j$ as in the previous case.
\item In the remaining cases, we replace $C_i$ by $C_i'$ in the decomposition tree but do not merge it with $C_j$.
\end{itemize}
After all simplifications of this type have been performed, the remaining decomposition tree must have the structure of a path of components connecting $s$ to $t$, where each of the gluing triangles in this path may also be connected to a single mimicking network off the path.

\paragraph{\bf Simplification phase II: components along the terminal path.}
To complete the algorithm we will perform a similar sequence of replacements along the path from $s$ to $t$.  However, we must use a slightly different technique, since we may now need to deal with four terminals at a time: $s$ and the three vertices in the 3-sum connecting the first two components in the $s$--$t$ path.  Additionally, after some of the replacements in this final stage of the algorithm, we will have to compute flows in networks that are neither planar nor of bounded treewidth, but that can be made planar by the removal of a constant number of edges; a result of Hochstein and Weihe \cite{hw-mstfkc-07} allows us to compute minimum cuts in these graphs in near-linear time. In this stage of the algorithm, as long as $s$ and $t$ belong to different components, we perform the following steps:
\begin{itemize}
\item Let $C_i$ be the component containing $s$ and $T_i$ be the set of vertices in the clique-sum connecting $C_i$ to the next component in the $s$--$t$ path.
\item Compute the minimum cut amounts between $s$ and each nonempty subset of $T_i$, using the algorithm of Hochstein and Weihe \cite{hw-mstfkc-07} if $C_i$ has been formed by adding a constant number of edges to a planar component, and the algorithm of Hagerup et al.~\cite{hknr-cmfnc-98} if $C_i$ is a bounded-treewidth component.
\item Replace $C_i$ by a mimicking network $C_i'$ for single-source external flows (as described in Lemma~\ref{mimic4}) using the computed cut amounts.
\item Glue $C_i'$, and (if it exists) the other mimicking network sharing the same gluing triangle, into the next component in the $s$--$t$ path, forming a path with one fewer component. If the next component was planar, it becomes a graph formed from a planar graph by adding a constant number of edges, while if it had bounded treewidth, its new treewidth is again bounded by the maximum of its old treewidth and~3.
\end{itemize}
Eventually this simplification will leave us with a planar or bounded treewidth graph containing both $s$ and $t$, and we can compute the maximum flow between them directly in near linear time.

\paragraph{\bf Reversing the simplifications and constructing a flow.}
We then reverse the sequence of simplifications we have performed, replacing each mimicking network with the larger network it replaced; for each such replacement we perform a single flow computation to find valid flow amounts forming the same external flow as in the mimicking network. At the end of this reversal of the simplification process, we will have a correct maximum flow in the original network that we were given as input.

We summarize our results as a theorem.

\begin{theorem}
If we are given as input a decomposition of a directed flow network into a 3-clique-sum of planar and bounded-treewidth components, then in $O(n\log n)$ time we may compute a maximum flow between any two terminals $s$ and $t$ of the network.
\end{theorem}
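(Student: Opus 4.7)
The plan is to establish the theorem by verifying that each reduction step preserves the maximum $s$--$t$ flow and then accounting for the total work so that it sums to $O(n\log n)$. The structure of the argument follows the three phases of the algorithm described above: refinement of the decomposition tree, replacement of components off the $s$--$t$ path, and replacement of components along the path, followed by a final computation on the collapsed graph and a reversal to produce an actual flow.

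For correctness, I would argue by induction on the number of components remaining that the maximum $s$--$t$ flow is preserved at every replacement. Each time a component $C_i$, glued to the rest of the graph along a clique of at most three vertices, is replaced by a mimicking network $C_i'$, the set of realizable external flows at the gluing clique is unchanged by Lemma~\ref{mimic3}, and hence by Lemma~\ref{hagerup} the maximum $s$--$t$ flow of the whole network is unchanged. The same argument applies in Phase~II using Lemma~\ref{mimic4} for the four-terminal single-source case, where the source $s$ sits in $C_i$ and the other three terminals are the vertices of the clique connecting $C_i$ to the next component along the path. Once the reduction terminates with a single planar or bounded-treewidth graph containing both $s$ and $t$, the max-flow in that graph is the answer; the reversal phase then reconstructs a concrete flow inside each original component by a single local max-flow computation that realizes the determined external-flow pattern, which is feasible precisely because the mimicking networks were defined to preserve realizability.

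For the running time, I would charge the work of each iteration to the component it eliminates. Min-cuts between a constant number of terminals can be computed in $O(|C_i|\log|C_i|)$ time in a planar component via the cited planar max-flow algorithms, in linear time in a bounded-treewidth component via the Hagerup et al.\ algorithm, and in $O(|C_i|\log|C_i|)$ time in a planar component augmented by $O(1)$ extra edges via the Hochstein--Weihe algorithm. Each replacement contributes only $O(1)$ new vertices and edges to at most one neighbouring component, so across the entire execution any vertex of the original graph is involved in only $O(1)$ replacements, and the sizes of the successive components sum to $O(n)$. Summing the per-component bounds therefore yields the target $O(n\log n)$, and the reversal phase adds only a constant factor.

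The main obstacle, and the place where the argument requires the most care, is maintaining the structural invariants on which the min-cut subroutines depend as the tree is contracted. Phase~I must preserve both the planarity of each planar component and the property that every gluing triangle is a face of its component; this is exactly why the initial refinement splits non-triconnected components via the SPQR tree and subdivides each planar component along its separating triangles, and why the outerplanarity of the three-terminal mimicking network is essential so that it can be glued into a face without introducing crossings. Phase~II is subtler, because merging a four-vertex single-source mimicking network into a planar component along the $s$--$t$ path may introduce a bounded number of non-planar edges, and one must verify that only $O(1)$ such extra edges ever accumulate before the path collapses to a single component, so that the Hochstein--Weihe algorithm remains applicable throughout. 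Once these invariants are confirmed by a case analysis of the gluing patterns, the inductive correctness proof and the telescoping time bound both go through.
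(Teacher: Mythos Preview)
Your proposal is correct and follows essentially the same approach as the paper: the three-phase algorithm (refinement via SPQR trees and separating triangles, Phase~I contraction off the $s$--$t$ path using the outerplanar three-terminal mimicking network of Lemma~\ref{mimic3}, Phase~II contraction along the path using Lemma~\ref{mimic4} together with Hochstein--Weihe for near-planar components), with correctness via preservation of realizable external flows and the $O(n\log n)$ bound obtained by charging each min-cut computation to the eliminated component. The only minor imprecision is your phrasing about extra non-planar edges ``accumulating'' in Phase~II: they do not accumulate along the path, because at each step the current component (planar plus $O(1)$ edges) is entirely replaced by an $O(1)$-size mimicking network before being glued into the next planar component, so the number of extra edges is freshly bounded by $O(1)$ at every step rather than being controlled by a global sum.
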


\begin{corollary}
For any fixed one-crossing graph $H$, maximum flows in directed $H$-minor-free flow networks may be computed in $O(n\log n)$ time once a clique-sum decomposition of the network has been found.
\end{corollary}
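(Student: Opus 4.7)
The plan is to derive the Corollary as an immediate consequence of the Theorem, using the Robertson--Seymour structural decomposition for one-crossing-minor-free families~\cite{rs-onecrossing} to supply the Theorem's hypothesis. Essentially all of the algorithmic content has been packaged inside the Theorem; what remains is simply to confirm that a clique-sum decomposition of the required form exists whenever $G$ is $H$-minor-free for a fixed one-crossing graph $H$.

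First I would appeal to the cited structural result: for any one-crossing $H$, every $H$-minor-free graph admits a decomposition as a 3-clique-sum of planar graphs and graphs of treewidth bounded by a constant $w(H)$ depending only on $H$. Because $H$ is fixed, $w(H)$ is an absolute constant, so the non-planar components really do have the bounded treewidth required by the hypothesis of the Theorem. The Corollary assumes such a clique-sum decomposition is already in hand, so no decomposition algorithm needs to be invoked inside the proof itself; I would simply feed this decomposition, together with the source $s$ and sink $t$, to the Theorem.

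Applying the Theorem then yields a maximum $s$--$t$ flow in $O(n\log n)$ time. The constants hidden in this bound depend on the sizes of the mimicking networks used in the two simplification phases, on the bounded-treewidth flow algorithm of Hagerup et al.~\cite{hknr-cmfnc-98} (whose running time depends on $w(H)$), and on Hochstein and Weihe's near-linear-time minimum-cut algorithm~\cite{hw-mstfkc-07} for graphs that differ from a planar graph by a constant number of edges. Each of these contributions depends on $H$ only through constants, so for each fixed $H$ the overall bound is genuinely $O(n\log n)$.

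I do not anticipate any real obstacle in carrying out this reduction, since all of the algorithmic work has been done in proving the Theorem. The one mild subtlety is that the hidden constant factor depends on $H$ (in particular through $w(H)$ and through the treewidth-dependent constants in the Hagerup et al.\ subroutine), which is acceptable given that $H$ is fixed in the statement of the Corollary.
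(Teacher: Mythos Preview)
Your proposal is correct and matches the paper's approach: the Corollary is stated without a separate proof, as it follows immediately from the Theorem together with the Robertson--Seymour structural result~\cite{rs-onecrossing} that any $H$-minor-free graph (for one-crossing $H$) decomposes as a 3-clique-sum of planar and bounded-treewidth pieces. Your observation that the hidden constant depends on $H$ through $w(H)$ is accurate and consistent with the paper's treatment.
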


In the case of $K_5$-free and $K_{3,3}$-free graphs, we can find a clique-sum decomposition in $O(n)$ time~\cite{asano85,rl-ork5mfg-08}. The case of $K_{3,3}$-free graphs is particularly simple, since it uses only 2-sums, and therefore involves simpler mimicking networks.

\begin{corollary}
We may compute maximum flows between any two terminals in a $K_5$-free or $K_{3,3}$-free directed flow network in $O(n\log n)$ time.
\end{corollary}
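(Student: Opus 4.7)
The plan is to reduce this corollary directly to the preceding corollary by supplying the missing ingredient, namely a clique-sum decomposition that can be computed in linear time. For $K_{3,3}$-minor-free graphs, I would invoke Asano's algorithm~\cite{asano85}, which constructs an SPQR tree and identifies each triconnected component as either planar or a copy of $K_5$; this yields the required 2-sum decomposition into planar and constant-sized components in $O(n)$ time. For $K_5$-minor-free graphs, I would invoke the linear-time algorithm of Reed and Li~\cite{rl-ork5mfg-08} to obtain a 3-sum decomposition into planar components and copies of the 8-vertex Wagner graph.

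Once such a decomposition is in hand, I would observe that $K_5$ and the Wagner graph both have a constant number of vertices, and hence trivially have bounded treewidth; the hypothesis of the preceding theorem (a 3-clique-sum of planar and bounded-treewidth components) is therefore satisfied. Applying that theorem then gives a maximum flow in $O(n\log n)$ time, and since the decomposition step costs only $O(n)$ time, the total running time remains $O(n\log n)$.

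The only step requiring care is the $K_{3,3}$-minor-free case, where the decomposition uses only 2-sums rather than 3-sums. Since a 2-sum is a special case of a 3-sum (one of the clique vertices is simply absent), the theorem still applies; moreover, as noted in the surrounding discussion, this actually simplifies matters because the mimicking networks involved have at most two or three terminals, so the four-terminal single-source construction of Lemma~\ref{mimic4} is not needed. I do not expect a genuine obstacle here: the proof is essentially a composition of two cited results with the main theorem, and the main thing to verify is that the output format of the decomposition algorithms matches the input format assumed by the theorem, which is straightforward from the statements in~\cite{asano85,rl-ork5mfg-08}.
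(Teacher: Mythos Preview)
Your proposal is correct and matches the paper's own argument essentially verbatim: the paper simply invokes the linear-time decomposition algorithms of Asano~\cite{asano85} and Reed--Li~\cite{rl-ork5mfg-08} to supply the clique-sum decomposition, then applies the preceding corollary, noting as you do that the $K_{3,3}$-free case is simpler because only 2-sums (hence smaller mimicking networks) are needed.
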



\section{Conclusions}

We have shown how to find flow in near linear time when the input graph is a clique-sum of planar and bounded tree-width graphs by using mimicking networks to iteratively simplify the graph. This technique allows us to use known near linear algorithms in each bounded tree-width or planar component of the decomposition.

There is no added generality in considering 4-sums of planar graphs in place of 3-sums (any 4-sum involving a planar graph can be rearranged into a combination of 3-sums), but our methods immediately generalize to 2-sums of bounded-genus graphs and bounded-treewidth graphs. Flow computation in 3-sums of bounded genus graphs is more problematic due to the possible existence of non-facial non-separating triangles.

The larger goal, however, is computing flow quickly in more arbitrary minor-free families of graphs.  Since flow can be computed efficiently in bounded genus and bounded tree-width graphs, the primary remaining open questions are those of computing flow in graphs with vortices or apexes, since these are the relevant building blocks for more general minor free families.   Even the case of a planar graph with two apexes would be of great interest, since it could be used to solve a generalization of the maximum flow problem on planar graphs in which the flow is allowed to have multiple sources and sinks.

{\raggedright
\bibliographystyle{abuser}
\bibliography{minorfreeflows}}


\newpage
\section*{Appendix: Small mimicking networks}

We show in this section that any three-terminal external flow network, and any four-terminal single-source external flow network, may be replaced by a mimicking network with at most one non-terminal vertex.

\begin{lemma}
\label{three-way}
Let $P$, $Q$, and $R$ be a partition of the terminals of any external flow network into three subsets.
Then $(P\cut Q\cup R)\le (P\cup Q\cut R)+(P\cup R\cut Q)$ and
$(P\cup Q\cut R)\le (P\cut Q\cup R)+(Q\cut P\cup R)$.
\end{lemma}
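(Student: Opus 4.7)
The plan is to deduce both inequalities from the standard submodularity of the directed cut function. For any vertex sets $X,Y\subseteq V$, let $c(X,\bar X)=\sum_{uv\in E,\,u\in X,\,v\notin X}c_{uv}$; a well-known identity (obtained by a direct edge-by-edge accounting) gives
\[
c(X,\bar X)+c(Y,\bar Y)\;\ge\;c(X\cap Y,\overline{X\cap Y})+c(X\cup Y,\overline{X\cup Y}).
\]
This holds even for directed capacities because each edge contributes to at least as many terms on the right as on the left. I would state this as a one-line preliminary fact and then apply it twice, once to derive each inequality by choosing an appropriate pair of minimum cuts.

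For the first inequality, I would take $X$ to be the source side of a minimum $P\cup Q\cut R$ cut and $Y$ the source side of a minimum $P\cup R\cut Q$ cut, so $P\cup Q\subseteq X$, $X\cap R=\emptyset$, $P\cup R\subseteq Y$, $Y\cap Q=\emptyset$. Then $X\cap Y$ contains $P$ and is disjoint from $Q\cup R$, so it is a feasible $P\cut Q\cup R$ cut, and the submodular inequality yields
\[
(P\cup Q\cut R)+(P\cup R\cut Q)\ge c(X\cap Y,\overline{X\cap Y})+c(X\cup Y,\overline{X\cup Y})\ge(P\cut Q\cup R),
\]
where the last step drops the nonnegative second term.

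For the second inequality, I would instead take $X$ to be the source side of a minimum $P\cut Q\cup R$ cut and $Y$ the source side of a minimum $Q\cut P\cup R$ cut. Now $X\cup Y$ contains $P\cup Q$ and is disjoint from $R$ (since neither $X$ nor $Y$ meets $R$), so it is a feasible $P\cup Q\cut R$ cut, and submodularity again gives
\[
(P\cut Q\cup R)+(Q\cut P\cup R)\ge c(X\cup Y,\overline{X\cup Y})\ge(P\cup Q\cut R).
\]

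There is no real obstacle here; the only things to check carefully are that the combined sets ($X\cap Y$ in the first case, $X\cup Y$ in the second) really are feasible for the LHS cut problem, which is immediate from the containment and disjointness properties of $X$ and $Y$, and that the submodular inequality applies in the directed setting, which is a standard edge-counting argument.
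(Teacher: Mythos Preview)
Your proof is correct. Both inequalities follow exactly as you describe once submodularity of the directed cut function is granted, and the feasibility checks for $X\cap Y$ and $X\cup Y$ are immediate from the inclusion/disjointness constraints you list.

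The paper's own argument is slightly different and even more elementary: it works with cuts as \emph{edge sets} rather than vertex sets and uses plain subadditivity rather than submodularity. For the first inequality, if $C_1$ is a minimum edge-cut for $P\cup Q\cut R$ and $C_2$ a minimum edge-cut for $P\cup R\cut Q$, then any directed path from $P$ to $R$ crosses $C_1$ and any directed path from $P$ to $Q$ crosses $C_2$, so $C_1\cup C_2$ separates $P$ from $Q\cup R$ and has capacity at most the sum; the second inequality is symmetric. Your route through submodularity is a little more machinery than strictly needed here (you discard one of the two right-hand terms), but it has the virtue of being the same tool that is actually required for the four-set inequality in Lemma~\ref{four-way}, where a bare union-of-edges argument no longer suffices.
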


\begin{proof}
If $C_1$ and $C_2$ are minimum cuts separating $P\cup Q$ from $R$ and $P\cup R$ from $Q$, respectively, then $C_1\cup C_2$ separates $P$ from both $Q$ and $R$, and has capacity at most $(P\cup Q\cut R)+(P\cup R\cut Q)$; therefore, the minimum cut value $P\cut Q\cup R$ must be at most this capacity. The other inequality follows by a symmetric argument.
\end{proof}

\begin{lemma}[Chaudhuri et al.~\cite{cswz-cmn-00}]
\label{mimic3}
Any external flow network with three terminals has a mimicking network with four vertices and six edges.
\end{lemma}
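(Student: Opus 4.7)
I would prove Lemma~\ref{mimic3} by explicit construction: take the four-vertex, six-edge network $H$ whose vertices are the three terminals $a,b,c$ together with a single nonterminal hub $v$, and for each terminal $x\in\{a,b,c\}$ include a directed edge $x\to v$ of capacity $\alpha_x := (x\cut Q\setminus\{x\})$ and a directed edge $v\to x$ of capacity $\beta_x := (Q\setminus\{x\}\cut x)$, where $Q=\{a,b,c\}$ and the cut values are taken in the original network $G$. The aim is to show that $H$ has the same value for each of the six minimum cuts separating a nonempty proper subset of $Q$ from its complement; once that is established, Lemma~\ref{hagerup} implies that $H$ and $G$ realize exactly the same external flows.

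To compute a cut $S\cut Q\setminus S$ inside $H$, I would case-split on which side of the cut contains the hub $v$. If $S=\{x\}$ is a singleton, placing $v$ on the $Q\setminus S$ side makes the cut consist only of the edge $x\to v$, contributing $\alpha_x$, whereas placing $v$ on the $S$ side makes the cut consist of the two edges $v\to y$ for $y\in Q\setminus\{x\}$, contributing $\sum_{y\ne x}\beta_y$. The resulting value is $\min\bigl(\alpha_x,\sum_{y\ne x}\beta_y\bigr)$. Similarly, if $|S|=2$, say $S=Q\setminus\{z\}$, the cut value in $H$ is $\min\bigl(\beta_z,\sum_{y\ne z}\alpha_y\bigr)$.

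The heart of the matter is to show that in each of these minima the first term is always the smaller one, so that the cut in $H$ equals the corresponding cut in $G$. For the singleton case, I would invoke Lemma~\ref{three-way} with $P=\{x\}$, $Q'=\{y\}$, $R=\{z\}$ (renaming to avoid clashing with the terminal set), which gives $(x\cut yz)\le (xy\cut z)+(xz\cut y)$, that is, $\alpha_x\le\beta_z+\beta_y$. For the two-element case, the other inequality of Lemma~\ref{three-way} gives $(yz\cut x)\le(y\cut xz)+(z\cut xy)$, i.e., $\beta_x\le\alpha_y+\alpha_z$ (after relabeling). Thus in each case the minimum is attained by the single-edge term, and the cut value in $H$ agrees exactly with the cut value in $G$.

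I do not anticipate a serious obstacle: the only nonroutine ingredient is the subadditivity inequality of Lemma~\ref{three-way}, which is precisely what guarantees that the star topology with capacities $\alpha_x$ and $\beta_x$ is rich enough to realize all six cut values simultaneously. The bookkeeping is symmetric in $a,b,c$, so once one cut value and one two-element cut value have been verified the remaining four follow by the same argument with the roles of the terminals permuted; combining this with Lemma~\ref{hagerup} completes the proof.
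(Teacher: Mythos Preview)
Your proposal is correct and is essentially identical to the paper's proof: the same star $K_{1,3}$ construction with hub capacities $\alpha_x=x\cut(Q\setminus\{x\})$ and $\beta_x=(Q\setminus\{x\})\cut x$, the same appeal to Lemma~\ref{three-way} to show the single-edge cut is always the minimum, and the same conclusion via Lemma~\ref{hagerup}. Your write-up makes the case-split on the placement of the hub a bit more explicit than the paper does, but the argument is the same.
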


\begin{proof}
Let the three terminals of flow network $G$ be $a$, $b$, and $c$, and form a network that consists of these three terminals together with a fourth vertex $d$, in the form of a star $K_{1,3}$ with six edges connecting $d$ in both directions to each terminal. For each terminal $q$, set the capacity of the edge from $q$ to $d$ to the minimum cut amount $q\cut(\{a,b,c\}\setminus\{q\})$ as measured in $G$, and set the capacity of the edge from $d$ to $q$ to be $(\{a,b,c\}\setminus\{q\})\cut q$.
It follows from Lemma~\ref{three-way} that each minimum cut separating one of the terminals from the other two consists of a single edge, and therefore that the capacity of this cut in the mimicking network exactly matches the capacity of the corresponding cut in $G$. The correctness of the mimicking network then follows by Lemma~\ref{hagerup}.
\end{proof}

It is tempting to try an even simpler network, a triangle with three vertices and six edges, in place of the four-vertex network of Lemma~\ref{mimic3}, but as Chaudhuri et al. observe this does not work: the triangle can only mimic networks for which $(a\cut bc)+(b\cut ac)+(c\cut ab)=(ab\cut c)+(ac\cut b)+(bc\cut a)$, and this equality is not necessarily true in other three-terminal external flow networks. However, in undirected flow networks, it is possible to form an undirected triangle network as a mimicking network, as shown in \cite{cswz-cmn-00}: the capacity of the edge from $a$ to $b$ should be $\frac12\bigl((a\cut bc)+(b\cut ac)-(c\cut ab)\bigr)$ and symmetrically for the other three edges.

We also need mimicking networks for four terminals, but only for single-source external flows.
In this case, the following variant of Lemma~\ref{hagerup} is helpful.

\begin{lemma}\label{ssmimic}
A single-source external flow $(x_1, \ldots, x_k)$ is realizable in a network $G=(V,E)$ with terminals $Q = \{ q_1, \ldots, q_k \}$ if and only if the following relations are satisfied:
\begin{eqnarray}
\sum_{i=1}^k x_i & = & 0\\
\sum_{q_i \in S} -x_i & \le & (x_1\cut S), \mbox{ for all nonempty } S \subseteq Q\setminus\{x_1\}.
\end{eqnarray}
\end{lemma}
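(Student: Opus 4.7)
The plan is to reduce Lemma~\ref{ssmimic} to Lemma~\ref{hagerup} by exploiting the single-source structure, which collapses the $2^k-2$ subset cut conditions of Lemma~\ref{hagerup} onto the smaller family indexed by nonempty subsets of $Q\setminus\{q_1\}$. Both directions of the biconditional require only algebraic bookkeeping once one keeps track of sign conventions and the asymmetric direction of the cut notation $\cut$.

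For the forward implication, suppose the single-source external flow is realizable. Condition (1) is immediate. To verify (2), fix a nonempty $S\subseteq Q\setminus\{q_1\}$ and choose $T$ so that the directed cut $T\cut(Q\setminus T)$ attains the minimum $q_1\cut S$, with $q_1\in T$ and $S\subseteq Q\setminus T$. Applying the cut inequality of Lemma~\ref{hagerup} with this $T$ gives $\sum_{q_i\in T} x_i\le T\cut(Q\setminus T)=q_1\cut S$, which by (1) rewrites as $\sum_{q_i\in Q\setminus T}(-x_i)\le q_1\cut S$. Since $q_1\in T$, every $q_i\in Q\setminus T$ is a demand terminal with $-x_i\ge 0$, so $\sum_{q_i\in S}(-x_i)\le\sum_{q_i\in Q\setminus T}(-x_i)\le q_1\cut S$, which is precisely (2).

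For the reverse implication, I would assume (1) and (2) and verify each cut inequality of Lemma~\ref{hagerup} for every nonempty proper $S'\subseteq Q$, splitting on whether $q_1\in S'$. If $q_1\notin S'$, then every $x_i$ with $q_i\in S'$ is nonpositive, so $\sum_{q_i\in S'} x_i\le 0\le S'\cut(Q\setminus S')$ trivially. If $q_1\in S'$, I would set $S=Q\setminus S'$, which is a nonempty subset of $Q\setminus\{q_1\}$; then (1) yields the identity $\sum_{q_i\in S'} x_i=\sum_{q_i\in S}(-x_i)$, condition (2) bounds the right-hand side by $q_1\cut S$, and because the minimization defining $q_1\cut S$ only constrains $q_1$ to the source side whereas $S'\cut(Q\setminus S')$ pins the entire source side to be $S'$, one has $q_1\cut S\le S'\cut(Q\setminus S')$. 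Lemma~\ref{hagerup} then certifies realizability.

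The only step with any subtlety is the identity $\sum_{q_i\in S'} x_i=\sum_{q_i\in S}(-x_i)$, which is what allows a mixed supply-demand sum to be rewritten as a pure demand sum; this is the place where both (1) and the single-source hypothesis that every nonsource terminal has $x_i\le 0$ enter the argument simultaneously. I do not foresee any genuine obstacle: the proof is essentially a bookkeeping exercise layered on top of Lemma~\ref{hagerup}, and the only thing to watch is that the asymmetric directed-cut notation $\cut$ is used consistently when comparing the family of cuts indexed by $T$ with those indexed by $S'$.
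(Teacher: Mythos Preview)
Your argument is correct. Both directions go through as you outline: in the forward direction the key step is choosing the terminal partition $T$ realising $q_1\cut S$ and then using the single-source sign pattern to pass from $\sum_{q_i\in Q\setminus T}(-x_i)$ down to $\sum_{q_i\in S}(-x_i)$; in the reverse direction the inequality $q_1\cut S\le S'\cut(Q\setminus S')$ holds because every cut counted on the right is also counted on the left. One small remark: the identity $\sum_{q_i\in S'}x_i=\sum_{q_i\in S}(-x_i)$ that you flag uses only condition~(1); the single-source hypothesis enters separately, in the monotonicity step of the forward direction and in the $q_1\notin S'$ case of the reverse direction.

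Your route is genuinely different from the paper's. The paper does not invoke Lemma~\ref{hagerup} at all; instead it augments $G$ by a new super-sink $t$, adds an edge of capacity $-x_i$ from each $q_i$ ($i>1$) to $t$, and reduces realizability of the external flow to the existence of an ordinary $q_1$--$t$ flow of value $x_1$, which is then characterised by the classical two-terminal max-flow min-cut theorem. Each cut in the augmented graph has capacity equal to its capacity in $G$ plus the total demand on the source side, and requiring all such capacities to be at least $x_1$ is exactly the family of inequalities~(2). The paper's argument is thus slightly more elementary, resting only on the standard min-cut theorem rather than on Gale's multiterminal characterisation; your argument, on the other hand, makes explicit how the $2^k-2$ conditions of Lemma~\ref{hagerup} collapse to the $2^{k-1}-1$ conditions of Lemma~\ref{ssmimic} under the single-source constraint, which is a nice structural observation in its own right.
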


\begin{proof}
Augment $G$ by a sink vertex $t$, add an edge of capacity $-x_i$ from each terminal $q_i$ ($i>1$) to $t$, and set the demand on $t$ to $x_1$. The result follows by the classical max-flow min-cut theorem: the desired single-terminal external flow exists if and only if the augmented graph has a two-terminal flow meeting the given demands, if and only if it has no cut with capacity less than $x_1$. But any cut in the augmented graph has a capacity equal to its capacity in the original graph plus the sum of the demands for the terminals on the source side of the cut, and each inequality in the statement of the lemma can be restated as requiring the capacities of some of these cuts to be at least $x_1$. So a cut with capacity less than $x_1$ exists, if and only if at least one of these inequalities is violated.
\end{proof}

\begin{lemma}
\label{four-way}
Let $P$, $Q$, $R$, and $S$ be four disjoint sets of terminals in an external flow network $G$. Then
$(P\cut Q\cup R\cup S)+(P\cup Q\cup R\cut S)\le (P\cup Q\cut R\cup S)+(P\cup R\cut Q\cup S)$.
\end{lemma}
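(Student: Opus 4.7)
The plan is to derive this from the classical submodularity (posi-modularity) of directed cut capacities. Recall that in any directed graph with nonnegative edge capacities, the function $\phi(A)$ that measures the total capacity of edges leaving a vertex subset $A$ satisfies $\phi(A_1)+\phi(A_2)\ge \phi(A_1\cap A_2)+\phi(A_1\cup A_2)$ for all $A_1, A_2\subseteq V$. The whole lemma is obtained by choosing $A_1$ and $A_2$ cleverly and reading off what the intersection and union separate.

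First I would let $A_1$ be the source side of a minimum cut witnessing $(P\cup Q\cut R\cup S)$, so that $P\cup Q\subseteq A_1$ and $(R\cup S)\cap A_1=\emptyset$, and let $A_2$ be the source side of a minimum cut witnessing $(P\cup R\cut Q\cup S)$, so that $P\cup R\subseteq A_2$ and $(Q\cup S)\cap A_2=\emptyset$. Then I would check the four terminal sets against $A_1\cap A_2$ and $A_1\cup A_2$: the intersection contains $P$ (since $P$ is in both) and excludes each of $Q$, $R$, $S$ (since each is missing from at least one of $A_1$, $A_2$), so $A_1\cap A_2$ defines a cut separating $P$ from $Q\cup R\cup S$. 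Symmetrically, $A_1\cup A_2$ contains $P\cup Q\cup R$ and excludes $S$ (which is missing from both $A_1$ and $A_2$), so $A_1\cup A_2$ defines a cut separating $P\cup Q\cup R$ from $S$.

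From here the inequality drops out: the capacity of the cut given by $A_1\cap A_2$ is at least $(P\cut Q\cup R\cup S)$, the capacity of the cut given by $A_1\cup A_2$ is at least $(P\cup Q\cup R\cut S)$, and by submodularity
\[
(P\cup Q\cut R\cup S)+(P\cup R\cut Q\cup S)=\phi(A_1)+\phi(A_2)\ge \phi(A_1\cap A_2)+\phi(A_1\cup A_2),
\]
which is at least $(P\cut Q\cup R\cup S)+(P\cup Q\cup R\cut S)$.

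The only real step that needs care is the submodularity inequality for directed cut capacity; I would either cite it as a classical fact or, if self-containment is preferred, give the one-line verification by tallying each directed edge's contribution to the four quantities $\phi(A_1)$, $\phi(A_2)$, $\phi(A_1\cap A_2)$, $\phi(A_1\cup A_2)$ according to the endpoint combinations and observing that every edge contributes at least as much to the left side as to the right. Everything else is bookkeeping about which of $P$, $Q$, $R$, $S$ ends up on which side of the intersection and union cuts, which is the only place where the particular disjointness hypothesis on $P,Q,R,S$ is used.
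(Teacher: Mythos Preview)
Your proof is correct and is essentially the same argument as the paper's: both take the source sides $A_1,A_2$ of the two minimum cuts on the right-hand side, observe that $A_1\cap A_2$ separates $P$ from $Q\cup R\cup S$ and $A_1\cup A_2$ separates $P\cup Q\cup R$ from $S$, and then compare capacities. The only difference is presentational---the paper carries out the submodularity inequality by an explicit edge-by-edge accounting of $C_1\cup C_2$, whereas you invoke submodularity of the directed cut function $\phi$ by name.
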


\begin{proof}
Let $C_1$ and $C_2$ be minimum cuts separating $P\cup Q$ from $R\cup S$ and $P\cup R$ from $Q\cup S$, respectively. Let $A$ be the set of vertices of $G$ on the source side of $C_1$, and let $B$ be the set of vertices of $G$ on the source side of $C_2$. Then the edges of $C_1\cup C_2$ can be partitioned into three disjoint subsets: the edges from $A\cap B$ to $G\setminus(A\cap B)$, the edges from $A\cup B$ to $G\setminus (A\cup B)$, and any remaining edges that go in either direction between $A\setminus B$ and $B\setminus A$. The first of these three subsets of edges forms a cut separating $P$ from $Q\cup R\cup S$, and the second of these three subsets of edges forms a cut separating $P\cup Q\cup R$ from $S$, so the total capacity of $C_1\cup C_2$ is at least equal to $(P\cut Q\cup R\cup S)+(P\cup Q\cup R\cut S)$.
\end{proof}

\begin{lemma}
\label{mimic4}
Any single-source external flow network with four terminals has a mimicking network with five vertices and seven edges.
\end{lemma}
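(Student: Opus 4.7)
The plan is to construct the required mimicking network explicitly as a directed ``star'' with one Steiner vertex and then verify, via Lemma~\ref{ssmimic}, that it reproduces every single-source cut value of the original four-terminal network. Write $c_S := s\cut S$ for each nonempty $S\subseteq\{a,b,c\}$; these seven values are, by Lemma~\ref{ssmimic}, the only invariants of $G$ that a single-source mimicking network must preserve. Throughout, I will use the standard facts that the directed single-source cut function $T\mapsto s\cut T$ is monotone ($c_S\le c_T$ whenever $S\subseteq T$) and submodular ($c_{S\cup T}+c_{S\cap T}\le c_S+c_T$); the submodularity follows for directed graphs by the usual uncrossing argument on the source-side vertex sets of optimal cuts.

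The proposed mimicking network has vertex set $\{s,a,b,c,d\}$ and seven directed edges $s\to a$, $s\to b$, $s\to c$, $s\to d$, $d\to a$, $d\to b$, $d\to c$, with capacities
\[
\mathrm{cap}(s\to i) = c_{abc}-c_{\{a,b,c\}\setminus\{i\}}, \qquad
\mathrm{cap}(d\to i) = c_i+c_{\{a,b,c\}\setminus\{i\}}-c_{abc},
\]
for $i\in\{a,b,c\}$, and $\mathrm{cap}(s\to d) = c_{ab}+c_{ac}+c_{bc}-2c_{abc}$. Each capacity is nonnegative: the first family reduces to monotonicity $c_{abc}\ge c_{\{a,b,c\}\setminus\{i\}}$; the second to the disjoint submodular inequality $c_i+c_{\{a,b,c\}\setminus\{i\}}\ge c_{abc}$; and $\mathrm{cap}(s\to d)\ge 0$ follows by summing the three submodular inequalities $c_{ij}+c_{ik}\ge c_{abc}+c_i$ over $\{i,j,k\}=\{a,b,c\}$.

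The main computation is routine: in the star network any optimal source-side set for $s\cut S$ places $\{a,b,c\}\setminus S$ on the source side, since moving any such vertex to the sink side only costs extra capacity; so the only remaining choice is the side of $d$, which gives
\[
s\cut S = \sum_{i\in S}\mathrm{cap}(s\to i)+\min\!\Bigl(\sum_{i\in S}\mathrm{cap}(d\to i),\;\mathrm{cap}(s\to d)\Bigr).
\]
Submodularity of the $c_S$'s determines which branch of the $\min$ is binding in each of the seven cases: for $|S|=1$ the inequality $\mathrm{cap}(d\to i)\le\mathrm{cap}(s\to d)$, which rearranges to $c_i+c_{abc}\le c_{ij}+c_{ik}$ (submodularity with $\{i,j\}$ and $\{i,k\}$), is the binding one; for $|S|\ge 2$ the opposite inequality holds with slack equal to $\mu_S := \sum_{i\in S}c_i-c_S\ge 0$, so $\mathrm{cap}(s\to d)$ is binding. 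Substituting into the formula for $s\cut S$ makes the capacities telescope to $c_S$ in every case, and Lemma~\ref{ssmimic} then yields the conclusion.

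I expect the real obstacle to be not any single calculation but the bookkeeping: one must attach each of the three nonnegativity conditions and each of the seven cut equalities to its defining submodular (or monotone) inequality, and check that the $\min$ in the cut formula always resolves on the predicted side. The ordering convention of Figure~\ref{mimics} ($s\cut a\ge s\cut b,s\cut c$ and $s\cut ab\ge s\cut ac$) is not needed by the symmetric star above, though it could be invoked to collapse some of the cases by symmetry if desired.
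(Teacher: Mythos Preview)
Your construction is correct and genuinely different from the paper's. The paper builds an asymmetric five-vertex network with edges $s\to a$, $s\to b$, $s\to c$, $a\to x$, $b\to x$, $x\to b$, $x\to c$, which only works after permuting the sinks so that $s\cut a\ge\max\{s\cut b,s\cut c\}$ and $s\cut ab\ge s\cut ac$; the verification then runs through seven separate cut checks, each invoking Lemma~\ref{three-way} or Lemma~\ref{four-way}. Your symmetric star (edges $s\to a,b,c,d$ and $d\to a,b,c$) needs no such ordering, and the symmetry collapses the seven cut verifications to the three cases $|S|=1,2,3$; once you observe that the two branches of the $\min$ evaluate to $\sum_{i\in S}c_i$ and to $c_S$ respectively (for $|S|\ge 2$), the telescoping is immediate. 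This is cleaner than the paper's argument and demonstrates that the ordering convention in Figure~\ref{mimics} is an artifact of that particular gadget, not of the problem.

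One small repair: your justification of $\mathrm{cap}(s\to d)\ge 0$ by ``summing the three submodular inequalities $c_{ij}+c_{ik}\ge c_{abc}+c_i$'' does not quite close; that sum yields $2(c_{ab}+c_{ac}+c_{bc})\ge 3c_{abc}+(c_a+c_b+c_c)$, which is one step short. A direct route is to add $c_{ab}+c_{ac}\ge c_{abc}+c_a$ (submodularity) to $c_{bc}+c_a\ge c_{abc}$ (submodularity on disjoint sets, using $c_\emptyset=0$) and cancel $c_a$. Everything else in your verification checks out.
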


\begin{proof}
Given a single-source external flow network $G$, with terminals $s$, $a$, $b$, and $c$, we may assume without loss of generality (by permuting the final three terminals, if necessary) that $s\cut a$ is at least as large as $s\cut b$ and $s\cut c$, and that $s\cut ab$ is at least as large as $s\cut ac$. We create a mimicking network with a single nonterminal vertex $x$ and with the following seven edges:
\begin{itemize}
\item An edge from $s$ to $a$ with capacity $s\cut a$.
\item An edge from $s$ to $b$ with capacity $(s\cut ab)-(s\cut a)$.
\item An edge from $s$ to $c$ with capacity $(s\cut abc)-(s\cut ab)$.
\item An edge from $a$ to $x$ with capacity $(s\cut bc)+(s\cut a)-(s\cut abc)$.
\item An edge from $b$ to $x$ with capacity $(s\cut ac) + (s\cut ab) - (s\cut a) - (s\cut abc)$.
\item An edge from $x$ to $b$ with capacity $(s\cut b) + (s\cut a) - (s\cut ab)$.
\item An edge from $x$ to $c$ with capacity $(s\cut c) + (s\cut ab) - (s\cut abc)$.
\end{itemize}
Lemmas~\ref{three-way} and~\ref{four-way} can be used to show that all edge capacities in this network are non-negative.
As we now verify, each of the seven minimum cuts from $s$ to any subset of terminals in this network has the same capacity as the corresponding minimum cut in $G$:
\begin{itemize}
\item Every cut separating $s$ from $a$ must cut edge $sa$, so the cut that separates only that edge is the minimum cut, and has capacity $s\cut a$.
\item Every cut separating $s$ from $\{a,b\}$ must cut edges $sa$ and $sb$, so the cut that separates only those two edges is the minimum cut, and has capacity $s\cut ab$.
\item Every cut separating $s$ from $\{a,b,c\}$ must cut edges $sa$, $sb$, and $sc$, so the cut that separates only those three edges is the minimum cut, and has capacity $s\cut abc$.
\item There are three minimal cuts from $s$ to $\{b,c\}$: the cut $\{sa,sb,sc\}$, the cut $\{ax,sb,sc\}$, and the cut $\{xb,xc,sb,sc\}$. The first of these has capacity $(s\cut abc)\ge (s\cut bc)$, the second has capacity $s\cut bc$, and the third is at least as large as the second by Lemma~\ref{three-way}, so the minimum cut has capacity $s\cut bc$.
\item The minimal cuts from $s$ to $\{a,c\}$ are the cut $\{sa,sb,sc\}$ with capacity $s\cut abc$, the cut $\{sa,sc,bx\}$ with capacity $s\cut ac$, and the cut $\{sa,sc,xc\}$ which is at least as large as the previous cut by Lemma~\ref{three-way}. Therefore the minimum cut has capacity $s\cut ac$.
\item There are three minimal cuts from $s$ to $b$: the cut $\{sa,sb\}$, the cut $\{ax,sb\}$, and the cut $\{xb,sb\}$. The first of these has capacity $(s\cut ab)\ge (s\cut b)$, the second has capacity at least equal to $s\cut b$ by Lemma~\ref{four-way}, and the third has capacity $s\cut b$, so $s\cut b$ is the minimum cut value.
\item Among the cuts from $s$ to $c$, the ones that cut off another terminal from $s$ as well as $c$ fall into one of the previous cases, and therefore have capacity at least $s\cut c$. There are two remaining minimal cases: the cut $\{sc,ax,bx\}$ with capacity $(s\cut bc)-(s\cut abc)+(s\cut ac)$, at least as large as $s\cut c$ by Lemma~\ref{four-way}, and the cut $\{sc,xc\}$ with capacity $s\cut c$. Therefore, the minimum cut value is $s\cut c$.
\end{itemize}
Since all of these cuts have the same value as the corresponding cuts in $G$, it follows by Lemma~\ref{ssmimic} that the network constructed as above is a valid mimicking network for~$G$.
In the full paper we verify that all edge capacities are non-negative and that each of the seven minimum cuts from $s$ to a subset of terminals in this network has the same capacity as the corresponding cut in $G$, from which it follows that it is a valid mimicking network.
\end{proof}

The mimicking networks described in Lemmas~\ref{mimic3} and~\ref{mimic4} are shown in Figure~\ref{mimics}.

\newpage
\section*{Appendix: SPQR tree}

\begin{figure}[t]
\centering\includegraphics[width=1.5in]{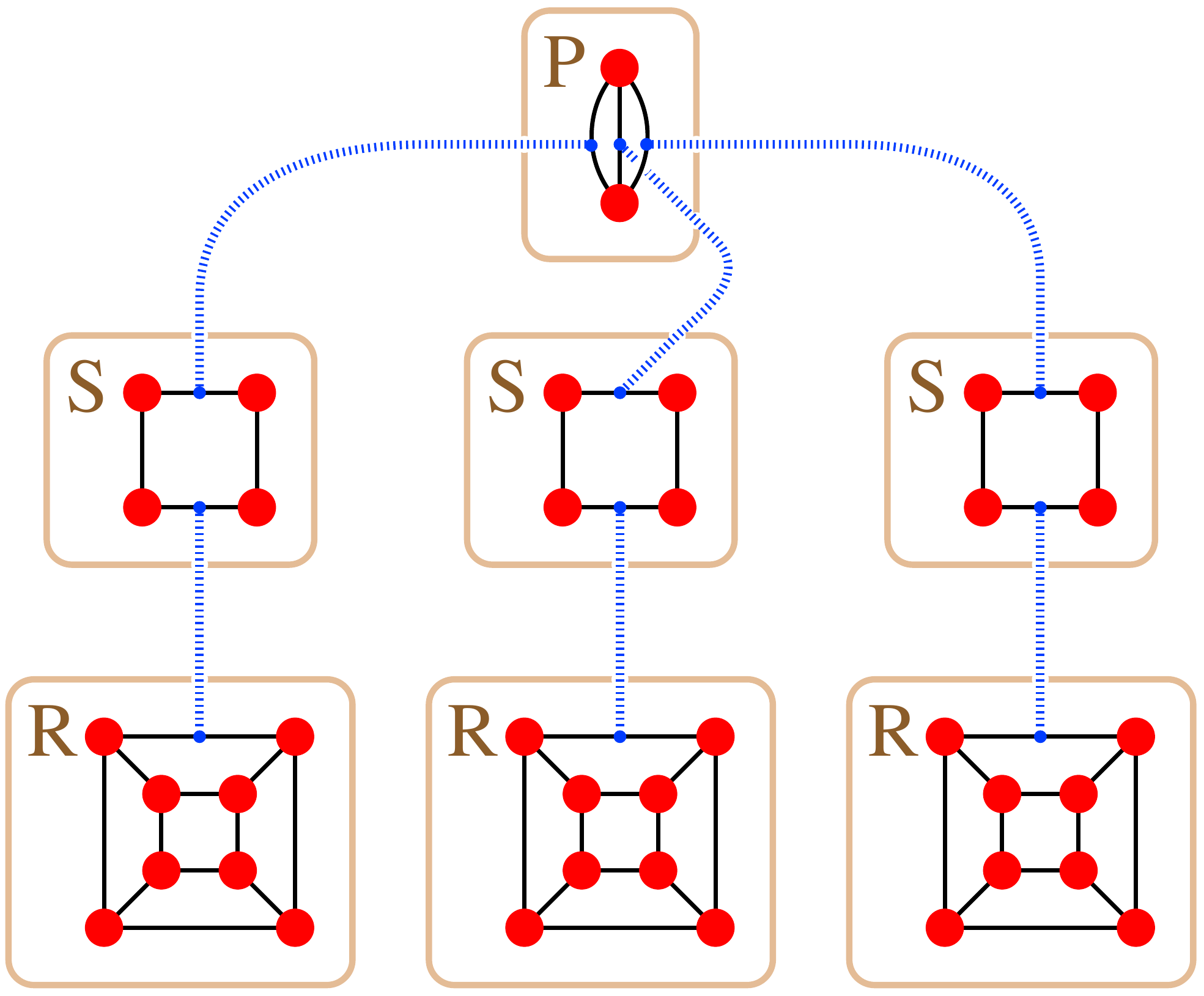}\qquad\includegraphics[width=1.5in]{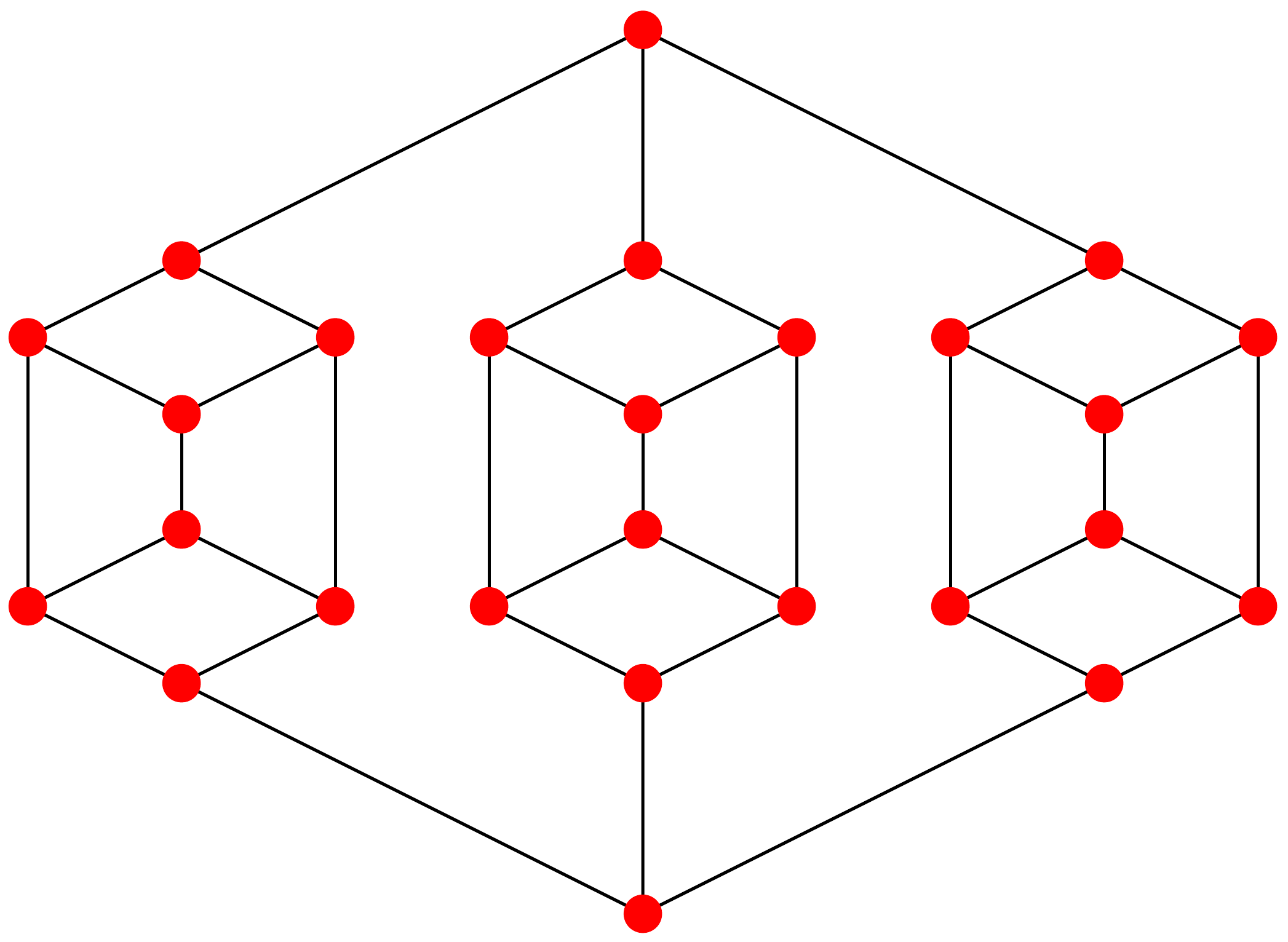}
\caption{An SPQR tree (left) and the graph that it represents (right). From the full version of~\cite{em-stop-10}.}
\label{fig:spqr}
\end{figure}

The SPQR tree of a graph is a formalized method of decomposing the graph into a 2-sum of simpler multigraphs, its \emph{triconnected components}~\cite{dt-ipt-89,ht-dgtc-73,m-scpcg-37}. In order to have a self-contained exposition, we describe it briefly here. The SPQR tree is properly defined only for 2-connected graphs; to apply it to graphs that may not be 2-connected, they need to be first decomposed into 2-connected components and the SPQR tree construction applied separately to each such component.

A \emph{cycle} is a 2-connected graph with $n$ vertices and $n$ edges; a \emph{bond} is a multigraph with two vertices and three or more edges, each of which has the same two endpoints.
Then an SPQR tree is a tree with the following properties:
\begin{itemize}
\item Each node $t_i$ of the SPQR tree is associated with a multigraph $G_i$ that is either a nontrivial 3-connected simple graph (an ``R'' node), a cycle (an ``S'' node), or a bond (a ``P'' node).
\item Each tree edge $t_it_j$ of the SPQR tree and is associated with a multigraph edge $e_{ij}$ in $G_i$ and another multigraph edge $e_{ji}$ in $G_j$. These two multigraph edges $e_{ij}$ and $e_{ji}$ are known as \emph{virtual edges}. 
\item Each multigraph edge can be a virtual edge for at most one SPQR tree edge.
\item Every SPQR tree edge connects two ``R'' nodes or two nodes of different types from each other.
\item Each ``P'' node has at most one non-virtual edge.
\end{itemize}
A tree of this type represents a graph $G$ that is formed by repeated 2-sum operations. Each 2-sum glues together the two designated virtual edges $e_{ij}$ and $e_{ji}$ for an SPQR tree edge and then deletes the glued-together edge. The order in which the 2-sums are performed does not affect the final result. Each edge that is not virtual, in the graph associated to each node, survives to become an edge in the overall graph. The graphs associated with the SPQR tree nodes are known as the \emph{triconnected components} of $G$.

Conversely, any 2-connected graph with two trivial exceptions, the graphs $K_1$ and $K_2$, has a unique representation as an SPQR tree; this representation can be constructed in linear time~\cite{dt-ipt-89,ht-dgtc-73}. In order to handle the two exceptions, we allow as a special case an SPQR tree with one ``Q'' node associated with one of these two exceptional graphs.

\end{document}